\def\BState{\State\hskip-\ALG@thistlm}
\newtheorem{theorem}{Theorem}
\newtheorem{lemma}{Lemma}
\newtheorem{corollary}{Corollary}
\newtheorem{definition}{Definition}
\begin{document}
\title{Reed-Muller codes have vanishing bit-error probability below capacity: a simple tighter proof via camellia boosting}
\author{Emmanuel Abbe and Colin Sandon \\ EPFL}
\date{}
\maketitle

\begin{abstract}
This paper shows that a class of codes such as Reed-Muller (RM) codes have vanishing bit-error probability below capacity on  symmetric channels. The proof relies on the notion of   `camellia codes': a class of symmetric codes decomposable into  `camellias', i.e., set systems that differ from sunflowers by allowing for scattered petal overlaps. The proof then follows from a boosting argument on the camellia petals with second moment Fourier analysis. For erasure channels, this gives a self-contained proof of the bit-error result in \cite{Kudekar17}, without relying on sharp thresholds for monotone properties \cite{friedgut1996every}. For error channels, this gives a shortened proof of \cite{reeves} with an exponentially tighter bound, and a proof variant of the bit-error result in \cite{as_focs23}. The control of the full (block) error probability still requires  \cite{as_focs23} for RM codes.
\end{abstract}




\begin{definition}
For a  linear code and a symmetric\footnote{A symmetric channel can be viewed as a mixture of binary symmetric channels (BSCs), i.e., independently for each $i$, $Y_i=(\epsilon_i,X_i \oplus w_i)$ with  $\epsilon_i$ in $[0,1/2]$ drawn under some  distribution independently of $w_i \sim \mathrm{Ber}(\epsilon_i)$.} channel, let $X$ be an $n$-dimensional codeword, $Y$ its output on the channel, $\hat{X}(Y)$ the maximum likelihood (ML) decoding of $X$ based off $Y$, $i \in [n]$,  $\hat{X}_i(Y)$ and $\hat{X}_i(Y_{-i})$ the ML decoding of $X_i$ based off $Y$ and $\{Y_j\}_{j \ne i}$ respectively, $P_{\mathrm{glo}
}=P(X\ne \hat{X}(Y))$, $P_{\mathrm{bit},i}=P(X_i\ne \hat{X}_i(Y))$, $P_{\mathrm{bit}}=
\max_{i \in [n]} P_{\mathrm{bit},i}$, $P_{\mathrm{loc},i}=P(X_i\ne \hat{X}_i(Y_{-i}))$, $P_{\mathrm{loc}}=\max_{i \in [n]} P_{\mathrm{loc},i}$.
\end{definition}

Note that the above measures are independent of the codeword choice since the code is linear. Also $P_{\mathrm{bit}}=o_n(1)$ is equivalent to $P_{\mathrm{loc}}=o_n(1)$, and for a BSC$(\epsilon)$, $P_{\mathrm{bit}}<\epsilon \wedge (1-\epsilon)$ implies $P_{\mathrm{loc}}<1/2$.

\begin{lemma} \label{baseError}
For a symmetric channel of capacity $C$ and a linear code of rate $R<C$, there exist $\Omega(n)$ values of $j$ such that $P_{\mathrm{loc},j}=1/2-\Omega(1)$.
\end{lemma}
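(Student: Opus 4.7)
The plan is to derive an information-theoretic lower bound on $\sum_j I(X_j; Y_{-j})$ via a Han-type identity, then convert it into $\Omega(n)$ coordinates with $P_{\mathrm{loc},j}=1/2-\Omega(1)$ by an averaging argument and a reverse-Fano step. Let $Z := \{j : X_j \text{ is constant}\}$ be the zero-column positions of the generator matrix; on these coordinates $P_{\mathrm{loc},j}=0$ trivially. If $|Z|=\Omega(n)$ we are done immediately, so we may assume $NZ := [n] \setminus Z$ contains almost all coordinates; each $X_j$ for $j \in NZ$ is then uniform on $\{0,1\}$.

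The key identity, valid for any linear code on any memoryless symmetric channel, is
\[
\sum_{j=1}^n I(X_j; Y_{-j}) \;\geq\; \sum_{j=1}^n I(X_j; Y) - I(X;Y).
\]
It is obtained by expanding $H(X_j,Y_j\mid Y_{-j})$ two ways using the conditional independence $Y_j\perp Y_{-j}\mid X_j$ (memorylessness), giving
\[
H(X_j\mid Y_{-j}) = H(Y_j\mid Y_{-j}) + H(X_j\mid Y) - H(Y_j\mid X_j),
\]
summing over $j$, and applying Han's inequality $\sum_j H(Y_j\mid Y_{-j})\leq H(Y)$ together with $H(Y)=I(X;Y)+nH(Y_1\mid X_1)$. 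For $j\in NZ$, the uniformity of $X_j$ on the symmetric channel yields $I(X_j;Y)\geq I(X_j;Y_j)=C$, and $I(X;Y)\leq H(X)=nR$, so
\[
\sum_{j\in NZ} I(X_j; Y_{-j}) \;\geq\; |NZ|\cdot C - nR \;=\; n(C-R) - |Z|\cdot C.
\]

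The proof concludes by splitting on $|Z|$: either $|Z|\geq n(C-R)/(2C)=\Omega(n)$ (and $Z$ directly provides the required good positions), or the displayed sum is at least $n(C-R)/2=\Omega(n)$, so since each $I(X_j;Y_{-j})\leq 1$ a standard averaging gives $\Omega(n)$ coordinates $j$ with $I(X_j;Y_{-j})\geq\Omega(1)$. For each such $j$, the elementary inequality $1-h_2(q)\leq(1-2q)^2$ applied to the posterior $Q_j := P(X_j=0\mid Y_{-j})$, combined with $|Q_j-\tfrac12|\leq \tfrac12$, yields $I(X_j; Y_{-j})\leq 2(\tfrac12 - P_{\mathrm{loc},j})$, and hence $P_{\mathrm{loc},j}\leq \tfrac12 - \Omega(1)$. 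The main obstacle is this last conversion: Fano's inequality goes the wrong way here (it says $I(X_j;Y_{-j})\geq 1-h_2(P_{\mathrm{loc},j})$, which does not lower-bound $\tfrac12-P_{\mathrm{loc},j}$), so one genuinely needs the reverse-Fano bound $1-h_2(q)\leq(1-2q)^2$ to close the loop; a secondary technical point is checking that the identity works for general symmetric (mixture) channels, where $H(Y_j\mid X_j)$ contains the side-information entropy $H(\epsilon_j)$ but cancels correctly thanks to memorylessness.
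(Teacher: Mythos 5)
Your proof is correct and reaches the same intermediate conclusion as the paper's—namely, $\Omega(n)$ coordinates with $I(X_j;Y_{-j})=\Omega(1)$—but via a noticeably different decomposition. The paper bounds $H(Y)\le H(X)+H(Y|X)\le n(H(Y_i)-(C-R))$, reads off $\Omega(n)$ good coordinates from the chain rule $H(Y)=\sum_j H(Y_j|Y_{<j})$, and then passes from $I(Y_j;Y_{<j})=\Omega(1)$ to $I(X_j;Y_{-j})=\Omega(1)$ by data processing (with an inline dichotomy to handle coordinates where $X_j$ is deterministic). You instead expand $H(X_j,Y_j\mid Y_{-j})$ two ways and use Han's inequality $\sum_j H(Y_j\mid Y_{-j})\le H(Y)$ to derive the clean symmetric bound $\sum_j I(X_j;Y_{-j})\ge \sum_j I(X_j;Y)-I(X;Y)$, which lands directly on the target quantity without introducing a coordinate ordering; your explicit $Z$/$NZ$ case split plays the same role as the paper's dichotomy. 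Both routes rest on the same final conversion from $I(X_j;Y_{-j})=\Omega(1)$ to $P_{\mathrm{loc},j}=1/2-\Omega(1)$: the paper compresses this into a single phrase, while you fill it in explicitly via the reverse-Fano bound $1-h_2(q)\le(1-2q)^2$ applied to the posterior. Net assessment: both are entropy-accounting arguments of comparable length; your Han-type inequality is a slightly more modular reusable statement and avoids the asymmetric $Y_{<j}$ conditioning, while the paper's version is marginally shorter. The one point worth being a bit more careful about in your write-up is the step $I(X_j;Y)\ge I(X_j;Y_j)=C$: the equality holds because $X_j$ is uniform for $j\in NZ$ and the channel is symmetric so uniform input achieves capacity; it is worth stating this, but it is not a gap.
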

\begin{proof} 
If the only valid codeword is $0$ then $P_{\mathrm{loc},j}=0$ for all $j$. 
Otherwise let $i$ be such that not all the codewords have their $i$-th bit set to $0$. Since the code is linear, it must be the case that $H(X_i)=1$. Note that $H(Y) \le H(X) + H(Y|X)\le n(H(Y_i)-(C-R))$. Since $H(Y)=\sum_{i=1}^n H(Y_i|Y_{<i})$, the previous inequality implies that there exist $\Omega(n)$ values of $j$ such that $H(Y_j|Y_{<j})=H(Y_i)-\Omega(1)$. For any such $j$ either $H(X_j|Y_{-j})=H(X_j)=0$ or $H(X_j)=1$ and $I(X_j;Y_{-j}) \ge I(Y_j;Y_{-j}) \ge I(Y_j;Y_{<j}) \ge \Omega(1)$; either way, $P_{\mathrm{loc},j}=1/2-\Omega(1)$.
\end{proof}



\begin{definition}
A camellia on $[n]$ centered at $i \in [n]$ with correlation $\rho$ is a collection of subsets of $[n]$ containing $i$ (the petals) such that for any $j\in [n]$ and a uniformly drawn petal $P$, $\mathbb{P}(j \in P )\le \rho$. 
\end{definition}
As discussed in Remark 2, this is quite different than a sunflower.
A camellia code of rate $R_n$,  expansion $\delta_n$ and correlation $\rho_n$ can then be defined as a linear transitive code such that for each $i \in [n]$, (i) there exists a camellia centered at $i$ of correlation $\rho_n$ such that the code's restriction to any petal has rate $\le R_n+\delta_n$, (ii) for each petal in the $i$-centered camellia and each permutation in the symmetry group of the code, if $i$ is contained in the petal's permutation, then the latter is  contained in the  camellia. The following gives a slightly simplified version of this definition.

\begin{definition}
A linear code of rate $R_n$ is a camellia code of expansion $\delta_n$ and correlation $\rho_n$ if it is transitive and has a collection of subsets (its petals) such that the following hold:
(i) the petals are invariant under the symmetry group of the code;
(ii) the code's restriction to any petal has rate $\le R_n+\delta_n$; (iii) given any coordinates $i\ne j$ and a uniformly\footnote{One can further generalize this definition by asking for $\rho_n,s_n$ such that there exists a probability distribution on petals that is preserved by the code symmetry group, such that for any subset $S$ of coordinates with $|S|=s_n\ge 1$ and any $i \notin S$, when drawing a petal $P$ under this distribution $\mathbb{P}(S\subseteq P|i\in P)\le \rho_n$. A uniform measure on the petals with $s_n=1$ recovers the current definition.  
} drawn petal $P$, $\mathbb{P}(j\in P|i\in P)\le \rho_n$.
\end{definition}

\begin{theorem}\label{errorBound}
For a symmetric channel of capacity $C$ and a camellia code of rate $R<C$ with parameters $\delta_n = C-R -\Omega_n(1)$ and $\rho_n=o_n(1)$, we have $P_{\mathrm{loc}}=O(\sqrt{\rho_n})=o_n(1)$. 
\end{theorem}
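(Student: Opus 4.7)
The plan is to combine Lemma~\ref{baseError}, applied to each petal, with a variance-reduction (boosting) argument whose second moment is controlled by the camellia correlation $\rho_n$. By transitivity of the code, $P_{\mathrm{loc},i}$ is independent of $i$; fix one and bound it. For every petal $P$, the restriction of the code to $P$ is a linear code of rate at most $R+\delta_n=C-\Omega(1)<C$, so Lemma~\ref{baseError} supplies a universal $c>0$ and a set $J_P\subseteq P$ of size $\Omega(|P|)$ such that $P_{\mathrm{loc},j}^{(P)}\le 1/2-c$ for every $j\in J_P$, where the superscript indicates the local error computed in the restricted code on $P$.

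Next, I would exploit code transitivity and the invariance of the petal family (condition~(i)) to convert these "good pairs" $(P,j)$ into many petals in which the center $i$ itself is good. The set $\{(P,j):j\in J_P\}$ is invariant under the symmetry group (since $P_{\mathrm{loc},j}^{(P)}$ transforms covariantly), so by code transitivity its projection on coordinates is uniform; combined with $\sum_P|J_P|\ge c_1\sum_P|P|$ from Lemma~\ref{baseError}, a constant fraction $\alpha>0$ of the petals through $i$ have $i\in J_P$. Denote this family $\mathcal{G}_i$, and for each $P\in\mathcal{G}_i$ set $\nu_P=\mathbb{E}[\tilde{X}_i\mid Y_{P\setminus\{i\}}]$ with $\tilde{X}_i=(-1)^{X_i}$. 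Then $\mathbb{E}[\nu_P^2]=\mathbb{E}[\nu_P\tilde{X}_i]\ge 4c^2$ (the equality by the tower property, the inequality from $\mathbb{E}|\nu_P|\ge 2c$ and Jensen), so the averaged estimator
\[
D \;=\; \frac{1}{|\mathcal{G}_i|}\sum_{P\in\mathcal{G}_i}\nu_P
\]
has signal $\mu:=\mathbb{E}[D\tilde{X}_i]\ge 4c^2=\Omega(1)$.

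The core step is the second moment $\mathbb{E}[D^2]=|\mathcal{G}_i|^{-2}\sum_{P_1,P_2}\mathbb{E}[\nu_{P_1}\nu_{P_2}]$. For a symmetric channel, expanding each $\nu_P$ in a Fourier series on $Y_{P\setminus\{i\}}$ and pairing characters isolates the "signal" cross-contribution $\mu_{P_1}\mu_{P_2}$ (which integrates to $\mu^2$) from a "noise" cross-contribution whose Fourier characters involve coordinates in $P_1\cap P_2\setminus\{i\}$; the latter is small whenever the overlap is small. By camellia property~(iii), for $P_1,P_2$ drawn independently from $\mathcal{G}_i$ one has $\mathbb{E}|P_1\cap P_2\setminus\{i\}|\le\rho_n\,\mathbb{E}|P_1\setminus\{i\}|$. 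Assembling the Fourier-side estimate with this overlap bound yields $\mathbb{E}[D^2]\le\mu^2+O(\sqrt{\rho_n})$, and Chebyshev applied to $D\tilde{X}_i$ gives
\[
P_{\mathrm{loc},i}\;\le\;\mathbb{P}(D\tilde{X}_i<0)\;\le\;\frac{\mathbb{E}[D^2]-\mu^2}{\mu^2}\;=\;O(\sqrt{\rho_n}).
\]

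The hardest part will be the uniform Fourier second-moment estimate for general symmetric channels: one must turn the "average small overlap" property of the petals into a bound on the average correlation of the local MAP estimators, despite $\nu_{P_1}$ and $\nu_{P_2}$ being coupled both through the signal $\tilde{X}_i$ and through the code's linear constraints on $P_1\cap P_2$. This is precisely the role of the "second moment Fourier analysis" flagged in the abstract, and its careful execution is what produces the stated $O(\sqrt{\rho_n})$ rate.
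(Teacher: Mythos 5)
Your proposal follows essentially the same route as the paper: a base-case signal from Lemma~\ref{baseError}, transitivity to make it uniform in the target coordinate, a second-moment Fourier covariance bound driven by the camellia correlation $\rho_n$, and a Chebyshev-type conclusion. The paper's proof is a bit leaner on the base-case step: rather than isolating a good subfamily $\mathcal{G}_i$, it observes directly that $\mathbb{E}_{i\in P,Y}[E_{P,i}]=\Omega(1)$ per petal (good coordinates contribute $\ge 2c$, the rest contribute $\ge 0$), and transitivity plus petal invariance then immediately yield $\mathbb{E}_{P\ni i,Y}[E_{P,i}]=\Omega(1)$ for each fixed $i$. This sidesteps the need to re-verify the camellia correlation bound on the subfamily $\mathcal{G}_i$, which your version would have to address (it does hold up to a factor $1/\alpha$ since $\mathcal{G}_i$ has constant density among petals through $i$, but the extra step is avoidable). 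The paper also works with the hard decision $E_{P,i}\in\{-1,0,1\}$ rather than the posterior mean $\nu_P$, and packages the Fourier argument you flag as the hard part into Lemma~\ref{correlationLem}, whose decomposition $Q_P=\sum_{S\subseteq P\setminus\{i\}}Q_{P,S}$ and Cauchy--Schwarz/overlap estimate are exactly what your sketch calls for. So: same skeleton, with some unnecessary extra structure on your side, and the key covariance lemma still left as a sketch in your write-up.
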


\begin{definition}
For a petal $P$ of a camellia code, $i \in P$, and a symmetric channel, let $E_{P,i}$ be $1$ (resp.\ $-1$) when the ML decoding of component $i$ given $Y_{P\setminus \{i\}}$ is correct (resp.\ incorrect), defining $E_{P,i}=0$ in case of a tie. 
\end{definition}

\begin{lemma}\label{correlationLem}
For a symmetric channel and a camellia code with correlation $\rho_n$, $i \in [n]$, two petals $P,P'$ uniformly drawn among the petals containing $i$, $\mathbb{E}\mathrm{Cov}(E_{P,i},E_{P',i})\le \sqrt{\rho_n}$.
\end{lemma}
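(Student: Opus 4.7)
The plan is to exploit the independence structure that a symmetric channel gives once one conditions on the codeword. Since the code is linear and the channel is symmetric, the joint law of $(E_{P,i},E_{P',i})$ does not depend on the transmitted codeword, so I would condition on $X=0$ throughout. Under a symmetric channel the outputs $Y_1,\dots,Y_n$ are then independent, and each $E_{P,i}$ is a $[-1,1]$-valued function of $Y_{P\setminus\{i\}}$ only. This independence structure invites a Fourier / ANOVA attack on the covariance.

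For each petal $P\ni i$ I would expand $E_{P,i}$ in the Efron-Stein / Hoeffding ANOVA basis relative to the product law of $Y$, writing $E_{P,i} = \sum_{S\subseteq P\setminus\{i\}} g_S^{(P)}(Y_S)$ with the pieces pairwise orthogonal in $L^2(Y)$ and each $g_S^{(P)}$ having zero conditional mean whenever any coordinate in $S$ is marginalized. Extend the family to all $S\subseteq[n]\setminus\{i\}$ by setting $g_S^{(P)}\equiv 0$ when $S\not\subseteq P\setminus\{i\}$. Orthogonality then yields the identity
\[
\mathrm{Cov}(E_{P,i},E_{P',i}) \;=\; \sum_{\emptyset\ne S\subseteq[n]\setminus\{i\}} \langle g_S^{(P)},\,g_S^{(P')}\rangle_{L^2(Y)},
\]
since the $S=\emptyset$ piece produces exactly $\mathbb{E}[E_{P,i}]\,\mathbb{E}[E_{P',i}]$ and cancels against it.

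Next I would take the expectation over two independent uniform petals $P,P'$ containing $i$. Independence swaps with the inner product and turns the right-hand side into $\sum_{\emptyset\ne S} \|\mathbb{E}_P g_S^{(P)}\|_{L^2(Y)}^2$. A pointwise-in-$Y$ Cauchy-Schwarz on the inner expectation over $P$ gives $\|\mathbb{E}_P g_S^{(P)}\|^2 \le p_S\cdot\mathbb{E}_P\|g_S^{(P)}\|^2$, where $p_S := \mathbb{P}_P[S\subseteq P]$. For any non-empty $S$ not containing $i$, picking any $j\in S$ and invoking the camellia correlation condition gives $p_S\le p_{\{j\}}\le\rho_n$. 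Summing over $S$ and using Parseval $\sum_S \|g_S^{(P)}\|^2 = \|E_{P,i}\|_2^2\le 1$ I would obtain
\[
\mathbb{E}_{P,P'}\,\mathrm{Cov}(E_{P,i},E_{P',i}) \;\le\; \rho_n\cdot\mathbb{E}_P\|E_{P,i}\|_2^2 \;\le\; \rho_n \;\le\; \sqrt{\rho_n},
\]
which is in fact a bit stronger than the stated bound.

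The one point that needs care is setting up the ANOVA decomposition for a general symmetric channel, where each $Y_i$ lives in $[0,1/2]\times\{0,1\}$ rather than in $\{0,1\}$. Two clean fixes both work: (i) condition on the noise profile $(\varepsilon_1,\dots,\varepsilon_n)$, run ordinary Walsh-Fourier on the resulting BSC product, and then integrate back in $\varepsilon$; or (ii) invoke directly the abstract Efron-Stein / Hoeffding decomposition on a product probability space, whose orthogonality and Parseval statements hold verbatim. The covariance algebra and the Cauchy-Schwarz step are entirely insensitive to the output alphabet, so this really is the only bookkeeping hurdle.
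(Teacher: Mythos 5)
Your proof is correct, and it uses the same core toolkit as the paper's argument (Efron--Stein / ANOVA decomposition of $E_{P,i}$ over $Z_{P\setminus\{i\}}$, orthogonality of the components, Cauchy--Schwarz, and Parseval), but it organizes the Cauchy--Schwarz step more efficiently and ends up with the sharper bound $\rho_n$ rather than $\sqrt{\rho_n}$. The paper, after reaching $\mathbb{E}_{P,P'}\sum_{S}\mathbb{E}_Z[Q_{P,S}Q_{P',S}]$, applies Cauchy--Schwarz term by term over $Z$ to decouple $P$ and $P'$, then again over $S$, producing a square root that it carries through via Jensen; the final estimate is therefore $\sqrt{\rho_n}$. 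You instead interchange the $S$-sum with $\mathbb{E}_{P,P'}$ and use that $P,P'$ are iid and independent of $Y$ to rewrite each term as the nonnegative quantity $\|\mathbb{E}_P g_S^{(P)}\|_{L^2(Y)}^2$; only then do you apply a single pointwise Cauchy--Schwarz over $P$, peeling off exactly one factor $p_S \le \rho_n$ per nonempty $S$. Parseval then gives $\rho_n\cdot\mathbb{E}_P\|E_{P,i}\|_2^2\le\rho_n$. This cleanly dominates the stated $\sqrt{\rho_n}$ and would in fact square the exponent in Corollary~\ref{RMbit}. Your handling of the general symmetric channel (conditioning on the noise profile $\epsilon$, or directly invoking the abstract ANOVA decomposition on the product space of $Z$) matches the paper's use of $Z_i=(\epsilon_i,w_i)$ and is fine. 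The only phrasing to tighten is "each $E_{P,i}$ is a $[-1,1]$-valued function of $Y_{P\setminus\{i\}}$": what you actually need (and tacitly use) is that after fixing $X=0$, $E_{P,i}$ is a function of the noise coordinates $Z_{P\setminus\{i\}}$ alone, which is what makes the ANOVA basis relative to the product law of $Z$ the right one; this is exactly the paper's framing.
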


\begin{proof}
Recall that a symmetric channel produces $Y_i=(\epsilon_i,X_i \oplus w_i)$ for each coordinate $i$. Let $Z_i=(\epsilon_i,w_i)$ for each $i$. 
Since the code is linear and the channel is symmetric, $E_{P,i}$ is a function of $Z_{P\setminus \{i\}}$. Let $Q_P(z)$ be the function that returns the value $E_{P,i}$ would take if $Z=z$ (we drop the index $i$ in $Q_P(z)$ as it remains fixed in what follows). For every $S\subseteq P$, define the contribution of $S$ to $Q_P$ as the function such that $Q_{P,S}(z)=\sum_{S'\subseteq S} (-1)^{|S|-|S'|}\mathbb{E}[Q_P(Z)|Z_{S'}=z_{S'}]$. E.g., for the BSC channel, $Q_{P,S}=\langle Q_P, \chi_S \rangle \chi_S$ where $\{\chi_S\}_{S \subseteq [n]}$ is the Fourier basis for $\mathrm{Ber}(\epsilon)^n$. With this definition, $Q_{P,S}(z)$ depends only on $z_S$, $Q_{P,S}$ is orthogonal to any function that is independent of $Z_i$ for any $i\in S$, and $Q_P(z)=\sum_{S\subseteq P\backslash \{i\}} Q_{P,S}(z)$. Thus, given independent random petals $P$ and $P'$ containing $i$, the expected covariance between $E_{P,i}$ and $E_{P',i}$ is

\begin{align}
&\mathbb{E}_{P,P',Z}\left[(Q_P(Z)-\mathbb{E}_{Z'}[Q_P(Z')])\cdot (Q_{P'}(Z)-\mathbb{E}_{Z'}[Q_{P'}(Z')])\right]\\
&=\mathbb{E}_{P,P'} \sum_{S\subseteq P\cap P'\backslash\{i\}: S\ne\emptyset} \mathbb{E}_Z[Q_{P,S}(Z)\cdot Q_{P',S}(Z)] \\
&\le \mathbb{E}_{P,P'} \sum_{S\subseteq P\cap P'\backslash\{i\}: S\ne\emptyset} (\mathbb{E}_Z[Q^2_{P,S}(Z)]\mathbb{E}_Z[Q^2_{P',S}(Z)])^{1/2}  \\
&\le \mathbb{E}_{P,P'} \left(\sum_{S\subseteq P\cap P'\backslash\{i\}: S\ne\emptyset} \mathbb{E}_Z[Q^2_{P,S}(Z)]\sum_{S\subseteq P\cap P'\backslash\{i\}: S\ne\emptyset}\mathbb{E}_Z[Q^2_{P',S}(Z)]\right)^{1/2} \\
&\le \mathbb{E}_{P,P'}\left(\sum_{S\subseteq P\cap P'\backslash\{i\}: S\ne\emptyset} \mathbb{E}_Z[Q^2_{P,S}(Z)]\sum_{S\subseteq P'\backslash\{i\}}\mathbb{E}_Z[Q^2_{P',S}(Z)]\right)^{1/2} \\
&= \mathbb{E}_{P,P'}\left(\left(\sum_{S\subseteq P\cap P'\backslash\{i\}: S\ne\emptyset} \mathbb{E}_Z[Q^2_{P,S}(Z)]\right)\mathbb{E}_Z[Q^2_{P'}(Z)]\right)^{1/2} \\
&\le \mathbb{E}_{P,P'}\left(\sum_{S\subseteq P\cap P'\backslash\{i\}: S\ne\emptyset} \mathbb{E}_Z[Q^2_{P,S}(Z)]\right)^{1/2} .
\end{align}
Further,
\begin{align}
&\mathbb{E}_{P,P'}\left(\sum_{S\subseteq P\cap P'\backslash\{i\}: S\ne\emptyset} \mathbb{E}_Z[Q^2_{P,S}(Z)]\right)^{1/2}
\le \mathbb{E}_{P}\left(\sum_{S\subseteq P\backslash\{i\}: S\ne\emptyset} \mathbb{P}[S\subseteq P']\cdot\mathbb{E}_Z[Q^2_{P,S}(Z)]\right)^{1/2} \\
&\le \mathbb{E}_{P}\left(\sum_{S\subseteq P\backslash\{i\}: S\ne\emptyset} \rho_n\mathbb{E}_Z[Q^2_{P,S}(Z)]\right)^{1/2} \label{rho} 
\le \mathbb{E}_{P}\left(\rho_n \sum_{S\subseteq P\backslash\{i\}} \mathbb{E}_Z[Q^2_{P,S}(Z)]\right)^{1/2} \\
&= \mathbb{E}_{P}(\rho_n \mathbb{E}_Z[Q^2_{P}(Z)])^{1/2} \le \sqrt{\rho_n},
\end{align}
where the first inequality in \eqref{rho} follows from the fact that, knowing that $S\subseteq P\backslash\{i\}, S\ne\emptyset$, there must exists some $j \ne i$ such that $j \in S$, thus $\mathbb{P}[S\subseteq P']\le \mathbb{P}[ j\in P'] \le \rho_n$. 
\end{proof}

\begin{lemma}\label{2moment}
Let $\{E_i\}_{i=1}^k$ be  random variables on $\{-1,0,1\}$ with $\frac{1}{k} \sum_{i\in [k]} \mathbb{E}[E_i]=\Omega(1)$ and $c_k=\frac{1}{k^2}\sum_{i,j \in [k]}\mathrm{Cov}(E_i,E_j)$. Then $\mathrm{Maj}(E_1,\ldots,E_k)=\mathrm{sign}(\sum_{i=1}^k E_i)
)=1$ with probability $1-O(c_k)$.
\end{lemma}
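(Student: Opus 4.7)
The plan is a direct application of Chebyshev's inequality to $S := \sum_{i=1}^k E_i$. First I would observe that the hypothesis $\frac{1}{k}\sum_i \mathbb{E}[E_i] = \Omega(1)$ yields $\mathbb{E}[S] \ge \alpha k$ for some constant $\alpha>0$, while bilinearity of covariance converts the definition of $c_k$ into
$$\mathrm{Var}(S) = \sum_{i,j\in[k]} \mathrm{Cov}(E_i,E_j) = k^2 c_k.$$
Since each $E_i \in \{-1,0,1\}$, the event $\{\mathrm{sign}(S)\ne 1\}$ coincides with $\{S\le 0\}$ (ties in the majority vote correspond exactly to $S=0$), and this event is in turn contained in the deviation event $\{|S - \mathbb{E}[S]| \ge \mathbb{E}[S]\}$.

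Chebyshev's inequality then yields
$$\mathbb{P}(S\le 0) \;\le\; \mathbb{P}\bigl(|S-\mathbb{E}[S]|\ge \mathbb{E}[S]\bigr) \;\le\; \frac{\mathrm{Var}(S)}{(\mathbb{E}[S])^2} \;\le\; \frac{k^2 c_k}{\alpha^2 k^2} \;=\; O(c_k),$$
which is the claimed bound. There is no real obstacle here: the lemma is the standard second-moment inequality repackaged in the form needed to combine with Lemma~\ref{baseError} (which supplies an $\Omega(k)$ subset of coordinates with $\mathbb{E}[E_{P,i}]$ bounded away from $0$) and Lemma~\ref{correlationLem} (which controls the averaged covariance by $O(\sqrt{\rho_n})$) in the proof of Theorem~\ref{errorBound}. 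The only point worth flagging is that $\mathrm{sign}$ takes the value $1$ strictly, so the tie $S=0$ must count as a failure; this is handled automatically by working with $\{S\le 0\}$ rather than $\{S<0\}$ above.
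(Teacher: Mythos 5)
Your proof is correct and is essentially identical to the paper's: the paper's one-line argument is precisely the Chebyshev bound $\mathbb{P}(\sum_i E_i \le 0) \le \mathrm{Var}(\sum_i E_i)/(\mathbb{E}[\sum_i E_i])^2 = O(k^2 c_k)/\Omega(k^2) = O(c_k)$ that you spell out. The only difference is that you make explicit the containment $\{S\le 0\}\subseteq\{|S-\mathbb{E}[S]|\ge\mathbb{E}[S]\}$ and the tie-handling, which the paper leaves implicit.
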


\begin{proof}
$
\mathbb{P}\left(\sum_{i=1}^k E_i\le 0\right) 
\le \mathrm{Var}\left[\sum_{i=1}^k E_i \right]/(\mathbb{E}\left[\sum_{i=1}^k E_i \right])^2=O( k^2c_k)/\Omega(k^2)=O(c_k)$.
\end{proof}

\begin{proof}[Proof of Theorem 1]
First, observe that by Lemma \ref{baseError}, there exists $c>0$ such that for any petal $P$ and uniformly drawn $i$ in $P$, there is a nonvanishing probability that we can recover the value of $X_i$ from $Y_{P\backslash\{i\}}$ with accuracy at least $1/2+c$. Also, for other choices of $i$ we can still recover $X_i$ from $Y_{P\backslash\{i\}}$ with accuracy at least $1/2$. Thus, $\mathbb{E}_{i,Y}(E_{P,i})=\Omega(1)$. That in turn means that if we pick a pair of a vertex $i$ and a petal $P$ such that $P$ contains $i$, uniformly at random, $\mathbb{E}_{P, i,Y}(E_{P,i})=\Omega(1)$. By transitivity of the code and invariance of the collection of petals over the code's symmetry group, this is independent of the value of $i$; so for any $i$, it is the case that for a uniformly drawn $P$  containing it,  $\mathbb{E}_{P,Y}(E_{P,i})=\Omega(1)$. Also, given two petals $P$ and $P'$ containing $i$ drawn independently and uniformly at random,
the expected covariance between them is at most $\sqrt{\rho_n}$ by Lemma \ref{correlationLem}. 
The conclusion then follows from Lemma \ref{2moment}.
\end{proof}

The Reed-Muller code $RM(m,r)$ is the code obtained by evaluating polynomials of degree at most $r$ on the $m$-dimensional Boolean hypercube (thus $n=2^m$ and $R_n=\sum_{i=0}^r {m \choose i}/n$). 
\begin{lemma}
An $RM$ code with $r<m-1$ is a camellia code of parameters $\delta_n=o_n(1),\rho_n=O(2^{-2\sqrt{\log(n)}/\log\log(n)})$. 
\end{lemma}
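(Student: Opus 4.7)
I will take the petals to be all affine subspaces of $\mathbb{F}_2^m$ of dimension $d := m - k$, where $k := \lceil 2\sqrt{m}/\log m\rceil$, endowed with the uniform distribution. Because the symmetry group of $RM(m,r)$ contains the full affine group $\mathrm{AGL}(m,\mathbb{F}_2)$, which acts transitively on $\mathbb{F}_2^m$ and permutes $d$-dimensional affine subspaces, transitivity and condition (i) are immediate.

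For condition (ii), the restriction of $RM(m,r)$ to a $d$-dimensional affine subspace is (after an affine change of coordinates) $RM(d,\min(r,d))$. When $r < d$, I would couple $X \sim \mathrm{Bin}(m,1/2)$ as $X = Y + Z$ with $Y \sim \mathrm{Bin}(m-k,1/2)$ and $Z \sim \mathrm{Bin}(k,1/2)$ independent, which rewrites the difference of rates as
\begin{equation*}
R_{RM(m-k,r)} - R_{RM(m,r)} \;=\; \mathbb{P}[Y \le r < Y+Z] \;\le\; \mathbb{E}[Z]\,\max_\ell \mathbb{P}[Y = \ell] \;=\; O\!\left(\frac{k}{\sqrt{m-k}}\right) \;=\; O(1/\log m).
\end{equation*}
When $d \le r \le m-2$, the restriction is all of $\{0,1\}^{2^d}$, so the gap is $1 - R_{RM(m,r)} \le \mathbb{P}[\mathrm{Bin}(m,1/2) > m-k]$; since $k = o(m)$, Chernoff gives $e^{-\Omega(m)}$. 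Either way $\delta_n = o_n(1)$.

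For condition (iii), a uniformly random petal containing $i$ is $i + V$ with $V$ uniform over $d$-dimensional linear subspaces of $\mathbb{F}_2^m$; hence for any $j \ne i$, counting linear subspaces that contain the nonzero vector $j-i$ gives
\begin{equation*}
\mathbb{P}(j \in P \mid i \in P) \;=\; \mathbb{P}(j - i \in V) \;=\; \frac{2^d - 1}{2^m - 1} \;\le\; 2 \cdot 2^{-k} \;=\; O\!\left(2^{-2\sqrt{\log n}/\log\log n}\right).
\end{equation*}

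The main obstacle is the uniform-in-$r$ rate bound in (ii): the rate regimes (vanishing, constant, near-$1$) of $RM$ codes behave quite differently, and naive estimates of $R_{RM(m-k,r)}-R_{RM(m,r)}$ degrade near $r \approx m-1$, which is why the hypothesis $r < m-1$ appears. The coupling sidesteps the middle regimes by reducing to the elementary fact that the maximum of a binomial PMF is $O(1/\sqrt{m-k})$, yielding a bound independent of $r$, while the near-trivial case $r \ge d$ is absorbed into the Chernoff tail. Choosing $k$ in the window $2\sqrt{m}/\log m \le k = o(\sqrt{m})$ simultaneously meets the correlation target and keeps $\delta_n = o_n(1)$.
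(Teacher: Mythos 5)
Your proposal is correct and uses the same petal collection as the paper (cosets of subspaces of codimension $k\approx 2\sqrt{m}/\log m$), verifying the three conditions via the same ingredients; the difference is only in the level of detail. Your coupling argument for condition (ii) makes rigorous the paper's one-line appeal to $\binom{m}{i}=O(2^m/\sqrt{m})$, and your counting of linear subspaces through $j-i$ for condition (iii) replaces the paper's slightly imprecise claim of exact independence of coordinate-membership events (in fact $\mathbb{P}(j\in P\mid i\in P)=(2^d-1)/(2^m-1)$ is strictly less than $\mathbb{P}(j\in P)=2^{d-m}$, so the approximation is safe) with an exact computation yielding the same $O(2^{-k})$ bound.
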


\begin{proof}
We claim that the collection of cosets of subspaces of dimension $\log(n)-2\sqrt{\log(n)}/\log\log(n)$ satisfies the conditions to be the code's petals for sufficiently large $n$. This collection is invariant under the symmetry group of the RM code due to the affine-invariance of the code and the fact that the symmetry group is exactly the group of affine transformations on $\mathbb{F}_2^m$ for $r<m-1$. The restriction of the code to any of these subspace cosets has a rate $o_n(1)$ greater than that of the overall code since ${m \choose i}=O(n/\sqrt{m})$ for all $i$. Also, any such subspace coset contains $n/2^{2\sqrt{\log(n)}/\log\log(n)}$ vertices. Whether or not a random coset in this collection contains a given vertex is independent of whether or not it contains any other vertex; for any coordinates $i\ne j$, the probability that a random one of these cosets contains $j$ conditioned on it containing $i$ is thus $O(2^{-2\sqrt{\log(n)}/\log\log(n)})$. 
\end{proof}
\begin{corollary}\label{RMbit}
A Reed-Muller code of asymptotic rate below capacity on any symmetric channel has bit-error probability vanishing as $P_{\mathrm{bit}}=O(2^{-\sqrt{\log(n)}/\log\log(n)})$.
\end{corollary}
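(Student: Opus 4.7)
The plan is simply to chain the preceding lemma (which identifies RM codes as camellia codes) with Theorem \ref{errorBound} (the camellia boosting bound), and then translate the resulting bound on $P_{\mathrm{loc}}$ to one on $P_{\mathrm{bit}}$. There is no substantive new work; the bulk of the effort has already been done in the two preceding results.

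First I would observe that for any fixed symmetric channel of capacity $C < 1$, a Reed-Muller code $RM(m,r)$ of asymptotic rate below $C$ must satisfy $r < m-1$ for all sufficiently large $m$, since the rate of $RM(m,m-1)$ tends to $1$. (Smaller rates behave identically.) Hence the preceding lemma applies, yielding that $RM(m,r)$ is a camellia code with parameters $\delta_n = o_n(1)$ and $\rho_n = O(2^{-2\sqrt{\log n}/\log\log n})$. I would then check the hypotheses of Theorem \ref{errorBound}: the rate satisfies $R < C$ by assumption, $C - R$ is bounded away from $0$, and $\delta_n = o_n(1) = C - R - \Omega_n(1)$. Also $\rho_n = o_n(1)$. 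Applying Theorem \ref{errorBound}, we get
\[
P_{\mathrm{loc}} \;=\; O(\sqrt{\rho_n}) \;=\; O\bigl(2^{-\sqrt{\log n}/\log\log n}\bigr).
\]

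To finish I would translate this into the claimed bound on $P_{\mathrm{bit}}$. This uses a general Bayes-optimality fact: for each coordinate $i$, the ML decoder $\hat{X}_i(Y)$ based on the full observation has error probability no larger than the ML decoder $\hat{X}_i(Y_{-i})$ based on the partial observation $Y_{-i}$, since the latter can be simulated from the former by discarding $Y_i$. Therefore $P_{\mathrm{bit},i} \le P_{\mathrm{loc},i}$ for every $i$, whence $P_{\mathrm{bit}} \le P_{\mathrm{loc}} = O(2^{-\sqrt{\log n}/\log\log n})$.

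There is no real obstacle: the only nontrivial line is the $P_{\mathrm{bit}} \le P_{\mathrm{loc}}$ comparison, which follows immediately from the Bayes-optimality of ML. The overall structure is a direct instantiation of the camellia boosting framework to the specific petal family given by cosets of $(\log n - 2\sqrt{\log n}/\log\log n)$-dimensional subspaces.
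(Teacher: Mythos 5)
Your proof is correct and matches the paper's implicit approach: chain the preceding lemma (RM as camellia with $\delta_n=o_n(1)$, $\rho_n=O(2^{-2\sqrt{\log n}/\log\log n})$) with Theorem~\ref{errorBound} to get $P_{\mathrm{loc}}=O(\sqrt{\rho_n})$, then pass to $P_{\mathrm{bit}}$. Your observation that $P_{\mathrm{bit},i}\le P_{\mathrm{loc},i}$ via Bayes optimality of ML (more data can only help) is the clean way to close the gap, and your check that asymptotic rate below capacity forces $r<m-1$ for large $m$ is a detail the paper leaves implicit but is worth stating.
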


{\bf Remarks.} 
(1) Corollary \ref{RMbit} gives an alternative proof to the bit-error result derived in \cite{Kudekar17}, without relying on the area theorem and with a self-contained proof that does not rely on sharp thresholds for monotone Boolean properties \cite{friedgut1996every}. Further, Corollary \ref{RMbit} applies to any symmetric channel, and gives a scaling of $P_{\mathrm{loc}}=O(\sqrt{\rho_n
}
)=O(2^{-\sqrt{\log(n)}/\log\log(n)})$, compared to the scaling of $O(\log\log(n)/\sqrt{\log(n)})$ in \cite{reeves}, with a simpler proof. 
(2) Camellia codes are codes affording a structure amenable for boosting procedures, i.e., decoding local components and merging them. The components need to have controlled dependencies, granted by the camellia structure. Compared to \cite{as_focs23},  camellia boosting gives a more general framework than sunflower boosting: in a sunflower, every petal is required to contain the kernel and they are not allowed to intersect anywhere else. The requirement on petals in a camellia is almost the opposite: they are forbidden from having any pair of vertices that they are excessively likely to contain but otherwise free to intersect anywhere (since a fixed pair of vertices should be contained in a limited number of petals, the petals overlap cannot be too large, but the overlap can be spread).  
In particular, in a doubly transitive code, taking the orbit of the petals in a sunflower under the code's symmetry group will yield a camellia (if the code's restriction to the petals has appropriate rate), while there is no obvious way to convert a general camellia to a sunflower. Also, we can go directly from knowing that a code is a camellia code to knowing that it has vanishing local error, while with a sunflower one needs to prove bounds on the probability that the restriction of the error vector to the kernel is ``bad" (in the sense that its extensions to the petals are likely to result in incorrect decodings of the target bit).  (3) The proof technique extends directly to $q$-ary symmetric input channels and $q$-ary RM codes when $q$ is prime. (4) Note that Theorem 1 only bounds the bit or local error probability;  the control of the block or global error probability is covered in \cite{as_focs23} for RM codes with significantly more work. Note that the `recursive boosting' steps of \cite{as_focs23} may adapt to codes with recursive camellia structures.  (5) An asset of the boosting framework is that it requires a weak base-case to get started, which can be obtained in great generality via the `entropy bending' argument (Lemma 1). Further, given an algorithmic black-box for the base-case, the camellia boosting algorithm is efficient if the camellia can be constructed efficiently, which is the case for RM codes. This thus gives an algorithmic reduction: if one can efficiently obtain the base-case (i.e., decode a bit in the RM code given the other noisy bits with error probability $1/2-2^{-c\sqrt{m}}$ for a sufficiently small constant $c$), then the camellia boosting algorithm is overall efficient.


\newcommand{\etalchar}[1]{$^{#1}$}

\end{document}